\newcommand{\keywords}[1]{\par\addvspace\baselineskip
\noindent\keywordname\enspace\ignorespaces#1}
\begin{document}

\mainmatter  

\title{Simulations between triangular and hexagonal number-conserving
  cellular automata}

\titlerunning{Simulations between triangular and hexagonal NCCAs}

\author{Katsunobu Imai\dag \ddag %
\and  Bruno Martin\ddag}
\authorrunning{K. Imai and B. Martin}

\institute{\dag Graduate School of Engineering, Hiroshima University, Japan.\\
\ddag I3S, CNRS, University of Nice-Sophia Antipolis, France.\\
\mailsc\\
}

\toctitle{Lecture Notes in Computer Science}
\tocauthor{Authors' Instructions}
\maketitle

\begin{abstract}
  A number-conserving cellular automaton is a cellular automaton whose
  states are integers and whose transition function keeps the sum of
  all cells constant throughout its evolution. It can be seen as a
  kind of modelization of the physical conservation laws of mass or
  energy. In this paper, we first propose a necessary condition for
  triangular and hexagonal cellular automata to be
  number-conserving. The local transition function is expressed by the
  sum of arity two functions which can be regarded as 'flows' of
  numbers. The sufficiency is obtained through general results on
  number-conserving cellular automata. Then, using the previous flow
  functions, we can construct effective number-conserving simulations
  between hexagonal cellular automata and triangular cellular
  automata.  \keywords{Cellular automata; Number-conservation.}
\end{abstract}

\section{Introduction}
A number-conserving cellular automaton (NCCA) is a cellular automaton (CA)
such that all states of the cells are represented by integers and the sum
of the numbers (states) of all cells of a global configuration is
preserved throughout the computation. It can be thought as a kind of
model of physical phenomena as, for example, fluid dynamics and
highway traffic flow~\cite{Nagel} and constitutes an alternative to
differential equations.

There is a huge literature published in the domain which witnesses the
great interest in number-conserving cellular automata which gathers
together physicians, computer scientists and mathematicians. Actually,
this particular model of CA applies to phenomena governed by
conservation laws of mass or energy.

Boccara et al.~\cite{Boccara} studied number conservation of
one-dimensional CAs on circular configurations. Durand et
al.~\cite{Durand,Formenti} considered the two-dimensional case and the
relations between several boundary conditions. These results are very
useful for deciding whether a given CA is number-conserving but do not
help much for the design of NCCAs with complex transition rules.

As for the rectangular von Neumann neighborhood
case~\cite{IIIM04,IWNC}, several necessary and sufficient conditions
to be number-conserving are shown.  According to these conditions, the
local function of a rotation-symmetric NCCA is expressed by the sum of
arity two functions as in~\cite{IFIM02}. Designing the functions, we
constructed several NCCAs including a 14-state logically universal
NCCA with rotation-symmetry~\cite{IWNC}, always with square
neighborhoods.

In this paper, we show specific necessary conditions for triangular
and hexagonal CAs to be number-conserving. Under some symmetry
assuptions (rotation symmetry for the triangular case and permutation
symmetry for the hexagonal one), we show that the local transition
function can be decomposed into the sum of several \emph{flow
  functions}, that is, functions only depending upon two
variables. These flow functions are later used to design respective
simulations between hexagonal cellular automata and triangular
cellular automata if we assume both cellular automata to be
permutation-symmetric.

This paper is organized as follows; section~\ref{sec:1} recalls the
classical definitions that will be used; section~\ref{sec:NCCA-tr+hex}
exhibits the necessary conditions for triangular and hexagonal CAs to
be number-conserving. And, finally, in section~\ref{sec:simu}, we
present number-conserving simulations between hexagonal cellular
automata and triangular cellular automata under the permutation
symmetry assumption.

\section{Definitions}
\label{sec:1}
\begin{definition}
  A \emph{deterministic two-dimensional radius one cellular automaton}
  is a 5-tuple defined by $A=(\mathbb{Z}^2, n, Q, f, q)$, where
  $\mathbb{Z}$ is the set of all integers, $n\in\{3,4,6\}$ is the
  number of neighbor cells (which implies a corresponding neighbor
  vector set, which is a finite and ordered set of distinct vectors
  from $\mathbb{Z}^2$:
  $\{\overrightarrow{v_0},\ldots,\overrightarrow{v_n}\}$), $Q$ is a
  non-empty finite set of internal states of each cell, $f : Q^n
  \rightarrow Q$ is a mapping called the \emph{local transition
    function} and $q \in Q$ is a \emph{quiescent state} that satisfies
  $f(q,\cdots,q)=q$.
\end{definition}

A {\it configuration} over $Q$ is a mapping $\alpha : \mathbb{Z}^2
\rightarrow Q$. The set of all configurations over $Q$ is denoted
by Conf($Q$), i.e., ${\rm Conf}(Q) = \{ \alpha | \alpha : \mathbb{Z}^2
\rightarrow Q \}$.  The function $F:{\rm Conf}(Q) \rightarrow {\rm
  Conf}(Q)$ is defined as follows and is called the {\it global
  function} of $A$ induced by $f$:
$\forall\alpha\in{\rm
  Conf}(Q),\forall\overrightarrow{v}\in\mathbb{Z}^2,
F(\alpha)(\overrightarrow{v})=f(\alpha(\overrightarrow{v}+\overrightarrow{v_0}),\ldots,\alpha(\overrightarrow{v}+\overrightarrow{v_n})).$
From now on, we will denote $\overrightarrow{v}$ by $(x,y)$.

Let us denote by $C_F$ the set of \emph{finite configurations}
i.e. which have a finite number of non-quiescent states. A cellular
automaton $A$ is {\it finite number-conserving} (FNC) when it
satisfies
\[\forall \alpha\in C_F,\quad\sum_{(x, y) \in \mathbb{Z}^2} \{ F(\alpha)(x, y) -\alpha(x, y) \} =0.\]
And, according to~\cite{Formenti}, $A$ is FNC if and only if it is
number-conserving.

Next we define some symmetry conditions of common use
(eg. see~\cite{IIIM04}).

\begin{definition}
  CA {\it A} is \emph{rotation-symmetric} if its local function $f$
  satisfies:
\[
\forall g,s_i \in Q \ (1\leq i\leq n), f(g,s_1,\cdots,s_n) = f(g,s_2,\cdots,s_n,s_1),
\]
and {\it A} is \emph{permutation-symmetric} if its local function $f$
satisfies:
\[
\forall g,s_i \in Q \ (i\leq 1\leq n), \forall \pi \in S_n, f(g,s_1,\cdots,s_n) = f(g,s_{\pi(1)},\cdots,s_{\pi(n)}),
\]
where $S_n$ denotes the permutation group with $n$ elements.
\end{definition}

\subsection{Simulation}
\label{sec:simulation}

Below, we propose the definition of a step by step simulation between
two CAs. It expresses that if a CA $A$ simulates each step of CA
$B$ in $\tau$ units of time, there must exist effective applications
between the corresponding configurations~\cite{DAM07}:
\begin{definition}\label{def:simu}
  Let ${\rm Conf}_A$ and ${\rm Conf}_B$ be the two sets of CA
  configurations of (resp.) $A$ and $B$. We say that $A$ simulates
  each step of $B$ in time $\tau$ (and we note $B\stackrel{\tau}{\prec}A$)
 if there exists a constant $\tau\in\mathbb{N}$ and two recursive functions
$\kappa:{\rm Conf}_B\rightarrow{\rm Conf}_A$ and
$\rho:{\rm Conf}_A\rightarrow{\rm Conf}_B$ such that $\kappa\circ\rho
=\mbox{Id}$ and for all $c,c'\in{\rm Conf}_B,$ there exists
$c''\in{\rm Conf}_A$ such that if $c'=F_B(c)$,
$c''=F_{A}^{\tau}(\kappa(c))$ with $\rho(c'')=c'$, where $F_M$
denotes the global transition of CA $M$ and $F_M^t$ the $t$-th
iterate of a global transition of CA $M$.
\end{definition}

Depending upon the value of $\tau$, we say that the simulation is
\emph{elementary} if $\tau=1$ and \emph{simple} if $\tau=O(1)$.

\section{Von Neumann neighborhood number-conserving CA}
\label{sec:NCCA-tr+hex}

Durand et al.~\cite{Durand} proved a general necessary and sufficient
condition for a NCCA with $n\times m$ neighbors to be
number-conserving. With this condition, any local function can be
decomposed into the summation of the local function in which several
arguments are fixed to zero (which is a quiescent state). The drawback
of this general statement is that it does not explicitely represent
neither the movement of the numbers nor symmetries. In the sequel, we
show novel necessary conditions for NCCA in different lattices
structures, namely triangular and hexagonal. The case of the square
grid was already considered in~\cite{IWNC}, where a necessary and
sufficient condition for a von Neumann neighborhood CA to be
number-conserving was shown.

\subsection{Triangular number-conserving cellular automata}
\label{sec:tNCCA}

\begin{figure}[htbp]
\begin{center}
\includegraphics[scale=0.45]{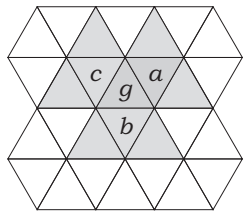}
\caption{A configuration in the triangular case.}
\label{tri-conf}
\end{center}
\end{figure}

\begin{theorem}
  A deterministic two-dimensional rotation-symmetric triangular CA $A
  = (\mathbb{Z}^2, 3, Q, t, q)$ is number-conserving iff $t$ satisfies
\begin{eqnarray*}
&&\exists \varphi :Q^2 \rightarrow \mathbb{Z}, \forall g, a, b, c \in Q, \nonumber\\
&& t(g,a,b,c) = g + \varphi(g, a) + \varphi(g, b) + \varphi(g, c)  \nonumber\\
&& \varphi(g, a) = -\varphi(a, g). \nonumber
\end{eqnarray*}
with $\varphi(g,a)= t(g, a, q, q) - t(g, q, q, q) - t(q, g, q, q) + q$.
\end{theorem}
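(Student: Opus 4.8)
The plan is to prove the two implications separately; the sufficiency is short, and the real content is the necessity of the decomposition.

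\emph{Sufficiency.} Assume $t$ has the displayed form with $\varphi$ antisymmetric. By~\cite{Formenti} it suffices to check conservation on finite configurations, so fix a finite $\alpha$. Writing, for each cell $p$ with the three neighbours $p_1,p_2,p_3$, $F(\alpha)(p)-\alpha(p)=\sum_{i=1}^{3}\varphi(\alpha(p),\alpha(p_i))$ and summing over all $p$, I would regroup the double sum as a sum over the unordered edges $\{p,p'\}$ of the triangular adjacency graph of the quantity $\varphi(\alpha(p),\alpha(p'))+\varphi(\alpha(p'),\alpha(p))$, which is $0$ term by term. This uses only that the triangular neighbourhood relation is symmetric — each cell is a neighbour of its neighbours, see Fig.~\ref{tri-conf} — so that every edge is counted with both orientations; alternatively the decomposition is an instance of the sufficient condition of Durand et al.~\cite{Durand}.

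\emph{Necessity.} Assume $A$ is number-conserving, equivalently FNC. The strategy is to evaluate the conservation identity $\sum_{(x,y)}\{F(\alpha)(x,y)-\alpha(x,y)\}=0$ on a short list of finite configurations and read off local equations for $t$. Rotation-symmetry lets me disregard the order of the three neighbour arguments, so quantities like $t(q,g,q,q)$ are unambiguous; the geometry of the triangular lattice (Fig.~\ref{tri-conf}) — in particular that its cell-adjacency graph is bipartite with no short cycles — is used to list exactly which cells are modified and with what multiplicity. I would run through: (1) the configuration supported on a single cell in state $g$, giving $(t(g,q,q,q)-g)+3(t(q,g,q,q)-q)=0$; (2) the configuration on two adjacent cells in states $g,a$, giving $(t(g,a,q,q)-g)+(t(a,g,q,q)-a)+2(t(q,g,q,q)-q)+2(t(q,a,q,q)-q)=0$; (3) a cell in state $g$ with two of its neighbours carrying $a,b$, which after subtracting the relations from (1) and (2) collapses to $t(g,a,b,q)=t(g,a,q,q)+t(g,b,q,q)-t(g,q,q,q)$; and (4) a cell in state $g$ with all three neighbours carrying $a,b,c$, which similarly collapses to $t(g,a,b,c)=t(g,a,q,q)+t(g,b,q,q)+t(g,c,q,q)-2t(g,q,q,q)$. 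Feeding (1) into (4) rewrites the latter exactly as $t(g,a,b,c)=g+\varphi(g,a)+\varphi(g,b)+\varphi(g,c)$ with $\varphi(g,a)=t(g,a,q,q)-t(g,q,q,q)-t(q,g,q,q)+q$, and combining (1) with (2) yields $\varphi(g,a)+\varphi(a,g)=0$ after a short cancellation.

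The hard part is not conceptual but bookkeeping: getting the lattice combinatorics exactly right for configurations (3) and (4) — that the ``outer'' neighbours of the active cells, and the inactive third neighbour of the centre in (3), are pairwise distinct and each feeds $t$ in a single argument slot — and then carrying out the telescoping elimination of the lower-order contributions without sign or multiplicity slips. Once the four identities are secured, the rewriting into the announced form and the verification of antisymmetry are purely routine.
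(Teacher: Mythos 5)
Your proof is correct and follows essentially the same route as the paper: the conservation identities you extract from the one-, two-, three- and four-cell configurations are exactly the specializations $\delta(g,q,q,q)=0$, $\delta(g,a,q,q)=0$, $\delta(g,a,b,q)=0$, $\delta(g,a,b,c)=0$ whose linear combination the paper expands, and your definition of $\varphi$ and the antisymmetry computation coincide with the paper's. The only (minor) difference is that for sufficiency you give the direct edge-pairing cancellation argument, whereas the paper simply invokes the result of Durand et al.~\cite{Formenti}.
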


\begin{proof}
  Let $\delta(g,a,b,c)\equiv
  t(a,g,q,q)+t(b,g,q,q)+t(c,g,q,q)+t(g,a,b,c)+2 t(q,a,q,q)+2
  t(q,b,q,q)+2 t(q,c,q,q)-a-b-c-g-6 q$. With respect to the configuration
  of Fig.\ref{tri-conf}, only shadowed cells change their states in
  the next step. Then for any $g,a,b,c$ in
  $Q$, $\delta(g,a,b,c)=0$ is necessary to preserve number conservation.
  Let's consider the following equation.
\[
  \delta(g,a,b,c)-\delta(g,a,q,q)-\delta(g,q,b,q)-\delta(g,q,q,c)+3 \delta(g,q,q,q)=0
\]
  To satisfy the number-conservation, it is also necessary. Finally, the following
condition is necessary by expanding the equation.
\[
t(g,a,b,c)=g+3 q+t(g,a,q,q)+t(g,q,b,q)+t(g,q,q,c)-3 t(g,q,q,q)-3 t(q,g,q,q)
\]
Let $\varphi(g,a)\equiv t(g, a, q, q) - t(g, q, q, q) - t(q, g, q, q)
+ q$, then
$\varphi(g,a)+\varphi(a,g)= 2q-t(q,g,q,q)-t(q,a,q,q)-t(g,q,q,q)-t(a,q,q,q)+t(g,a,q,q)+t(a,g,q,q) = t(g,a,q,q)-t(g,q,q,q)-t(a,q,q,q)=0$.

We use Durand et al. result~\cite{Formenti} for proving the sufficiency. \qed
\end{proof}

\remark
The condition also holds in the case of permutation-symmetry.

\subsection{Hexagonal number-conserving cellular automata}

\begin{figure}
  \centering
\includegraphics[scale=.45]{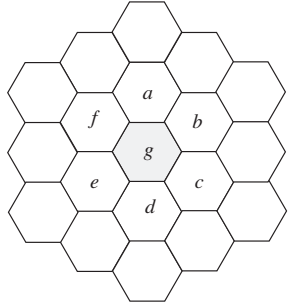}
  \caption{A configuration in the hexagonal case.}
  \label{fig:HexLabel}
\end{figure}
\begin{theorem}\label{th:HPNCCA}
A deterministic two-dimensional permutation-symmetric hexagonal CA,
$\mathcal{H}=(\mathbb{Z}^2,6,Q,\delta,q)$ is number-conserving iff
its local transition function $\delta$ satisfies:
\[
\begin{array}{l}
  \exists\psi:Q^2\rightarrow\mathbb{Z},\quad\forall g,a,b,c,d,e,f,g\in
  Q,\\
  \delta(g, a, b, c, d, e, f)=g+
  \psi(g,a)+\psi(g,b)+\psi(g,c)+\psi(g,d)+\psi(g,e)+\psi(g,f),
\end{array}
\]
with $\psi(g,x)=\delta(g,x,q,q,q,q,q)-\delta(g,q,q,q,q,q,q)-\delta(q,g,q,q,q,q,q)+q$.
\end{theorem}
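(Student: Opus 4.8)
The plan is to mirror the proof of the triangular theorem. For the ``only if'' direction I would read off, from a family of finite test configurations, the linear constraints that number conservation forces on $\delta$, and then eliminate the higher-order terms by an inclusion--exclusion over those configurations; for the ``if'' direction I would, exactly as in the triangular case, verify that any $\delta$ of the announced shape (with $\psi$ antisymmetric, the hexagonal analogue of $\varphi(g,a)=-\varphi(a,g)$) satisfies the Durand--Formenti sufficiency criterion, which here reduces to a ``flow'' cancellation.

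For necessity, fix a cell $H_{0}$ with state $g$ and label its six neighbours $H_{1},\dots,H_{6}$ in cyclic order. For $S\subseteq\{1,\dots,6\}$ consider the configuration $\gamma_{S}$ with $H_{0}=g$, $H_{i}=x_{i}$ for $i\in S$ (writing $(x_{1},\dots,x_{6})=(a,b,c,d,e,f)$), and every other cell quiescent; only finitely many cells can change, and the balance condition $\beta(S)$ stating that their increments sum to zero is necessary, permutation symmetry being used to normalise the arguments of $\delta$. The geometric fact I would exploit is that, in the hexagonal lattice, $H_{0}$ is the \emph{only} cell whose update depends on all six of $x_{1},\dots,x_{6}$: the update of $H_{i}$ depends only on $x_{i-1},x_{i},x_{i+1}$ (indices mod $6$), and that of every other changing cell on at most two consecutive $x_{j}$. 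Hence, whenever $S$ is \emph{not} contained in a block of three consecutive indices, in the alternating sum $\sum_{T\subseteq S}(-1)^{|T|}\beta(T)$ every contribution except $H_{0}$'s is killed (the remaining indices of $S$ contribute a vanishing factor $(1-1)^{m}$, $m\ge 1$), so the sum collapses to
\[
\sum_{T\subseteq S}(-1)^{|T|}\;\delta\bigl(g,\,(x_{i})_{i\in T},\,(q)_{i\notin T}\bigr)=0 .
\]
Taking $S$ an antipodal pair $\{i,i+3\}$ gives $\delta(g,x,y,q^{4})=\delta(g,x,q^{5})+\delta(g,y,q^{5})-\delta(g,q^{6})$; taking $S$ an alternating triple $\{i,i+2,i+4\}$ and then any $S$ of size $4,5,6$, and combining these with the lower-order identities by an easy induction on $|S|$, yields the additive form
\[
\delta(g,a,b,c,d,e,f)=\sum_{x\in\{a,\dots,f\}}\delta(g,x,q^{5})-5\,\delta(g,q^{6}) .
\]
The one-cell configuration $\gamma_{\emptyset}$ gives the further relation $\delta(g,q^{6})+6\,\delta(q,g,q^{5})-g-6q=0$, and substituting this together with the definition of $\psi$ into the additive form turns it into $\delta(g,a,\dots,f)=g+\psi(g,a)+\dots+\psi(g,f)$. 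A routine computation from the two-cell configuration $\gamma_{\{1\}}$ and the identities already obtained then also yields $\psi(g,a)+\psi(a,g)=0$, which is what the converse will rest on.

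For sufficiency, suppose $\delta(g,a,\dots,f)=g+\sum_{x}\psi(g,x)$ with $\psi$ antisymmetric. Then for any finite configuration $\alpha$ one has $F(\alpha)(v)-\alpha(v)=\sum_{w\sim v}\psi(\alpha(v),\alpha(w))$, and, regrouping the ordered adjacent pairs $(v,w)$ into the unordered pairs $\{v,w\}$,
\[
\sum_{v\in\mathbb{Z}^{2}}\bigl(F(\alpha)(v)-\alpha(v)\bigr)=\sum_{\{v,w\}}\bigl(\psi(\alpha(v),\alpha(w))+\psi(\alpha(w),\alpha(v))\bigr)=0 ,
\]
the sum being finite since $\psi(q,q)=0$. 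Equivalently, $\delta$ meets the Durand--Formenti criterion invoked for the triangular case, so $\mathcal{H}$ is number-conserving.

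The step I expect to be the real obstacle is the bookkeeping of the configurations $\gamma_{S}$ in the necessity part: one must determine exactly which cells change and, for each, exactly which of the six marked neighbours its update involves, so as to be sure the alternating sums annihilate everything but the central term. This is precisely where the hexagonal case departs from the triangular one --- there the peripheral cells attached to a neighbour are never shared and each sees a single marked cell, while here consecutive neighbours share an outer cell and the $H_{i}$ themselves see up to three marked states --- and it is what forces the argument through antipodal and alternating configurations rather than through the single ``flower'' of Fig.~\ref{fig:HexLabel} alone.
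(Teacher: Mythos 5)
Your argument is correct, and it reaches the theorem by a genuinely different route from the paper's, even though both rest on the same underlying data: the balance equations of the sub-configurations of the ``flower'' of Fig.~\ref{fig:HexLabel}. The paper writes down the single master identity~(\ref{eq:1}) for the full flower, obtains Lemmas~\ref{lem:decrease4}--\ref{lem:condsingle} by specialising variables to $q$, and then eliminates the three- and two-variable terms by an explicit linear combination with large coefficients (equations~(\ref{eq:eqn1n1}) and~(\ref{eq:tmpformula5})), a computation the authors admit was carried out with computer algebra. You instead take all balance relations $\beta(S)$ at once and collapse them by inclusion--exclusion, using the locality observation (the dependency set of $H_i$ is $\{i-1,i,i+1\}$, of an outer cell at most two consecutive indices) to certify that for antipodal pairs, alternating triples, and any $S$ with $|S|\ge 4$ only the central term survives; this yields the additive form $\delta(g,a,\dots,f)=\sum_x\delta(g,x,q^5)-5\delta(g,q^6)$ by a short induction, and the one-cell relation (the paper's Lemma~\ref{lem:condsingle}) then converts it into the $\psi$-form --- I checked the substitution and it is exact. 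The gain is conceptual transparency and no machine computation; the cost is the extra care you yourself flag about which sets $S$ are admissible (consecutive triples are not, which is why you need the alternating ones). Two further points distinguish your write-up favourably: you derive the antisymmetry $\psi(g,a)+\psi(a,g)=0$ from $\gamma_{\{1\}}$ together with the pair identity at quiescent centre (the analogue of Lemma~\ref{lem:qxy}), and your sufficiency argument is a direct pairing of flows over unordered adjacent pairs rather than an appeal to Durand et al. Note, however, that your ``if'' direction uses this antisymmetry, which the hexagonal theorem statement omits (unlike its triangular counterpart, which lists $\varphi(g,a)=-\varphi(a,g)$ explicitly); without it the bare decomposition with the displayed $\psi$ is not sufficient, so your proof in effect establishes the corrected statement, whereas the paper leaves this point implicit in its citation of the Durand--Formenti criterion.
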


\begin{proof}
We show that the condition is necessary.
Let us assume that $\mathcal{H}$ is FNC; then according to
Fig.~\ref{fig:HexLabel},
\begin{equation}
\begin{array}{l}
  g+a+b+c+d+f+12q = \delta(g, a, b, c, d, e, f)+\\
  \delta(a, b, f, g, q, q, q)+
  \delta(b, a, c, g, q, q, q)+
  \delta(c, b, d, g, q, q, q)+\\
  \delta(d, c, e, g, q, q, q)+
  \delta(e, d, f, g, q, q, q)+
  \delta(f, a, e, g, q, q, q)+\\
  \delta(q, a, q, q, q, q, q)+
  \delta(q, b, q, q, q, q, q)+
  \delta(q, c, q, q, q, q, q)+\\
  \delta(q, d, q, q, q, q, q)+
  \delta(q, e, q, q, q, q, q)+
  \delta(q, a, q, q, q, q, q)+\\
  \delta(q, a, b, q, q, q, q)+
  \delta(q, b, c, q, q, q, q)+
  \delta(q, c, d, q, q, q, q)+\\
  \delta(q, d, e, q, q, q, q)+
  \delta(q, e, f, q, q, q, q)+
  \delta(q, f, a, q, q, q, q)+
\end{array}
\label{eq:1}
\end{equation}
The local function $\delta(g, a, b, c, d, e, f)$ which satisfies
equation~(\ref{eq:1}) only depends upon terms of four non-quiescent
variables. The idea is to decrease this number of variables down
to two, to finally get the binary $\psi$ function. We first prove
Lemma~\ref{lem:decrease4} which allows to decrease the number of
variables in $\delta$.
\begin{lemma}\label{lem:decrease4}
  In the case of permutation-symmetry, the following equation holds
  for a hexagonal CA, $\mathcal{H}=(\mathbb{Z}^2,6,Q,\delta,q)$:
  $\forall g,x,u,z\in Q,$
\[
    \begin{array}{l}
      \delta(g, x, y, z, q, q, q) = g+x+y+z+12q
      -\delta(x, g, q, q, q, q, q)-
      \delta(q, x, y, g, q, q, q)\\
      -\delta(y, g, q, q, q, q, q)-
      \delta(q, y, z, g, q, q, q)-
      \delta(z, g, q, q, q, q, q)-
      \delta(q, x, z, g, q, q, q)\\
      -3\delta(q,x,q,q,q,q,q)-
      3\delta(q,y,q,q,q,q,q)-
      3\delta(q,z,q,q,q,q,q)
    \end{array}
\]
\end{lemma}
\begin{proof}
Cancelling variables $b,d$ and $f$ by assigning them to the quiescent state $q$
in equation~(\ref{eq:1}) gives, because $\mathcal{H}$ is permutation-symmetric:
\[
\begin{array}{l}
  g+a+c+e+12q=\delta( g,a,c,e,q,q,q ) +
  \delta( a,g,q,q,q,q,q ) +
  \delta( q,a,c,g,q,q,q ) +\\
  \delta( c,g,q,q,q,q,q ) +
  \delta( q,c,e,g,q,q,q ) +
  \delta( e,g,q,q,q,q,q ) +
  \delta( q,a,e,g,q,q,q ) +\\
  3\delta( q,a,q,q,q,q,q ) +
  3\delta( q,c,q,q,q,q,q ) +
  3\delta( q,e,q,q,q,q,q )
\end{array}\]\qed
\end{proof}
\begin{lemma}
  \label{lem:qxy}
  In the case of permutation-symmetry, the following equation holds
  for a hexagonal CA,
  $\mathcal{H}=(\mathbb{Z}^2,6,Q,\delta,q)$: $\forall x,y\in Q$,
  \[
    \begin{array}{ll}
      \delta(q,x,y,q,q,q,q)=&11q+x+y\\
      &-5\delta(q,x,q,q,q,q,q)-5\delta(q,y,q,q,q,q,q)\\
      &-\delta(x,q,q,q,q,q,q)-\delta(y,q,q,q,q,q,q)
    \end{array}
    \]
\end{lemma}

\begin{proof}

Lemma~\ref{lem:qxy} is proved by setting $g,b,c,e$ and $f$ to $q$ in
equation~(\ref{eq:1}).\qed
\end{proof}

\begin{lemma}
    \label{lem:condxy}
  In the case of permutation-symmetry, the following equation holds
  for a hexagonal CA,
  $\mathcal{H}=(\mathbb{Z}^2,6,Q,\delta,q)$: $\forall x,y\in Q$,
  \[
    \begin{array}{ll}
    8q+x+y=&3\delta(q,x,q,q,q,q,q)+3\delta(q,y,q,q,q,q,q)\\
    &+\delta(q,x,y,q,q,q,q)+\delta(q,y,x,q,q,q,q)\\
    &+\delta(x,y,q,q,q,q,q)+\delta(y,x,q,q,q,q,q).
    \end{array}
  \]
\end{lemma}

\begin{proof}
Lemma~\ref{lem:condxy} is proved by replacing $g,c,d$ and $e$ by $q$
in equation~(\ref{eq:1}).\qed
\end{proof}

\begin{lemma}
  \label{lem:condsingle}
  In the case of permutation-symmetry, the following equation holds
  for a hexagonal CA,
  $\mathcal{H}=(\mathbb{Z}^2,6,Q,\delta,q)$: $\forall x,y\in Q$,
  \[
    x=-6q+6\delta(q,x,q,q,q,q,q)+\delta(x,q,q,q,q,q,q).
    \]
\end{lemma}
The proof of Lemma~\ref{lem:condsingle} is straightforward.\\[2ex]

We now prove Theorem~\ref{th:HPNCCA}. We first make a repeated use of
Lemma~\ref{lem:decrease4} by sustracting it from equation~(\ref{eq:1})
with suitable variables substitutions and we obtain
equation~(\ref{eq:eqn1n1}) which only depends upon terms in two
non-quiescent variables.
{\small\begin{equation}
  \label{eq:eqn1n1}
  \begin{array}{l}
    5(a+b+c+d+e+f)+7g+174q+\\
\delta(a,b,q,q,q,q,q)+\delta(a,f,q,q,q,q,q)+\delta(b,a,q,q,q,q,q)+\delta(b,c,q,q,q,q,q)+\\\delta(c,b,q,q,q,q,q)+\delta(c,d,q,q,q,q,q)+\delta(d,c,q,q,q,q,q)+\delta(d,e,q,q,q,q,q)+\\\delta(e,d,q,q,q,q,q)+\delta(e,f,q,q,q,q,q)+\delta(f,a,q,q,q,q,q)+\delta(f,e,q,q,q,q,q)
+\\\delta(g,a,q,q,q,q,q)+\delta(g,b,q,q,q,q,q)+\delta(g,c,q,q,q,q,q)+\delta(g,d,q,q,q,q,q)+\\
\delta(g,e,q,q,q,q,q)+\delta(g,f,q,q,q,q,q)=\delta(g,a,b,c,d,e,f)+
12\delta(g,q,q,q,q,q,q)+\\16(\delta(q,a,q,q,q,q,q)+\delta(q,b,q,q,q,q,q)+\delta(q,c,q,q,q,q,q)+\delta(q,d,q,q,q,q,q)+\\\delta(q,e,q,q,q,q,q)+\delta(q,f,q,q,q,q,q))
+18\delta(q,g,q,q,q,q,q)
+2\delta(q,a,f,q,q,q,q)+\\
2\delta(q,a,g,q,q,q,q)+2\delta(q,b,a,q,q,q,q)+2\delta(q,b,g,q,q,q,q)+2\delta(q,c,b,q,q,q,q)+\\
2\delta(q,c,g,q,q,q,q)+2\delta(q,d,c,q,q,q,q)+2\delta(q,d,g,q,q,q,q)+2\delta(q,e,d,q,q,q,q)+\\
2\delta(q,e,g,q,q,q,q)+2\delta(q,f,e,q,q,q,q)+2\delta(q,f,g,q,q,q,q)+2\delta(q,g,a,q,q,q,q)+\\
2\delta(q,g,b,q,q,q,q)+2\delta(q,g,c,q,q,q,q)+2\delta(q,g,d,q,q,q,q)+2\delta(q,g,e,q,q,q,q)+\\
2\delta(q,g,f,q,q,q,q)
+3(\delta(q,a,b,q,q,q,q)+\delta(q,b,c,q,q,q,q)+\delta(q,c,d,q,q,q,q)+\\\delta(q,d,e,q,q,q,q)+
\delta(q,e,f,q,q,q,q)+\delta(q,f,a,q,q,q,q))
+7(\delta(a,q,q,q,q,q,q)+\\
\delta(b,q,q,q,q,q,q)+\delta(c,q,q,q,q,q,q)+\delta(d,q,q,q,q,q,q)+\delta(e,q,q,q,q,q,q)+\\\delta(f,q,q,q,q,q,q))
+\delta(q,a,c,q,q,q,q)+\delta(q,a,e,q,q,q,q)+\delta(q,b,d,q,q,q,q)+\\\delta(q,b,f,q,q,q,q)+\delta(q,c,e,q,q,q,q)+\delta(q,d,f,q,q,q,q).

  \end{array}
\end{equation}}
We observe that we have two kinds of terms in equation~(\ref{eq:eqn1n1}):
\begin{itemize}
\item $\delta(q,x,y,q,q,q,q)$;
\item $\delta(x,y,q,q,q,q,q)$.
\end{itemize}
The former will be changed by the repeated use of Lemma~\ref{lem:qxy}
and the latter by the repeated use of Lemma~\ref{lem:condxy}. This yields
to equation~(\ref{eq:tmpformula5}).
\begin{equation}
  \label{eq:tmpformula5}
  \begin{array}{l}
    13a+13b+13c+13d+13e+13f+17g+570q+\delta(g,a,b,c,d,e,f)=\\
13\delta(a,q,q,q,q,q,q)+13\delta(b,q,q,q,q,q,q)+13\delta(c,q,q,q,q,q,q)+\\
13\delta(d,q,q,q,q,q,q)+13\delta(e,q,q,q,q,q,q)+13\delta(f,q,q,q,q,q,q)+\\
\delta(g,a,q,q,q,q,q)+\delta(g,b,q,q,q,q,q)+\delta(g,c,q,q,q,q,q)+\\
\delta(g,d,q,q,q,q,q)+\delta(g,e,q,q,q,q,q)+\delta(g,f,q,q,q,q,q)+\\
78\delta(q,c,q,q,q,q,q)+78\delta(q,e,q,q,q,q,q)+
78\delta(q,b,q,q,q,q,q)+\\
78\delta(q,f,q,q,q,q,q)+78\delta(q,a,q,q,q,q,q)+
78\delta(q,d,q,q,q,q,q)+\\
12\delta(g,q,q,q,q,q,q).
  \end{array}
\end{equation}
The sum of the neighbors $13(a+b+c+d+e+f)$ is removed by the repeated
application of Lemma~\ref{lem:condsingle}. Every time we use
Lemma~\ref{lem:condsingle}, we also cancel terms of the form
$78\delta(q,x,q,q,q,q,q)+13\delta(x,q,q,q,q,q,q)$ in the rhs of
equation~\ref{eq:tmpformula5}. All remaining terms in the rhs of
equation~\ref{eq:tmpformula5} are like $\delta(g,x,q,q,q,q,q)$.

We use Durand et al. result~\cite{Formenti} for proving the sufficiency. This proves
Theorem~\ref{th:HPNCCA}.\qed
\end{proof}

All above computations were made using computer algebra systems and
the spreadsheets can be obtained from the authors.

\section{Simulations between hexagonal and triangular NCCAs}
\label{sec:simu}

In this section, we propose effective mutual simulations between
triangular and  hexagonal permutation-symmetric NCCAs.

\subsection{Elementary simulation of a triangular NCCA by a hexagonal NCCA}
\label{sec:tri-hex}

\begin{proposition}
  For any triangular permutation-symmetric NCCA ${\mathcal T}$, there
  is a hexagonal permutation-symmetric NCCA ${\mathcal H}$ such that
  ${\mathcal T} \stackrel{1}{\prec} {\mathcal H}$ and whose transition
  function is surjective.
\end{proposition}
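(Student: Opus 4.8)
The plan is to construct the hexagonal automaton $\mathcal{H}$ directly from the triangular one $\mathcal{T} = (\mathbb{Z}^2, 3, Q, t, q)$ by embedding the triangular lattice into the hexagonal lattice. First I would recall the geometric fact underlying the whole construction: each cell of the hexagonal lattice has six neighbors, and these split naturally into two groups of three according to the two sublattices of the triangular tiling that is dual to the hexagonal one — the three ``upward'' triangles and the three ``downward'' triangles around a vertex. Concretely, choosing an appropriate neighbor vector ordering, the neighbors $(v_1, v_3, v_5)$ form one triangular-type triple and $(v_2, v_4, v_6)$ the other. The idea is to let $\mathcal{H}$ use state set $Q$ and place a copy of $\mathcal{T}$'s configuration so that the information needed to apply $t$ at a triangular cell is available at the corresponding hexagonal cell from exactly one of these two triples.

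Next I would define $\kappa : {\rm Conf}_{\mathcal{T}} \rightarrow {\rm Conf}_{\mathcal{H}}$ to be this lattice embedding (an injective relabeling of $\mathbb{Z}^2$ that is recursive), and $\rho : {\rm Conf}_{\mathcal{H}} \rightarrow {\rm Conf}_{\mathcal{T}}$ its left inverse (reading off the embedded cells, and ignoring the rest, so that $\rho \circ \kappa = \mathrm{Id}$; note the paper's Definition~\ref{def:simu} actually requires $\kappa \circ \rho = \mathrm{Id}$, so I would take $\rho$ to be a genuine inverse by having $\kappa$ be a bijection of $\mathbb{Z}^2$, which is possible since the triangular ``cells'' in Fig.~\ref{tri-conf} and the hexagonal cells are both indexed by $\mathbb{Z}^2$). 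Then I would define the local rule $\delta$ of $\mathcal{H}$ using the flow decomposition: by the Theorem of Section~\ref{sec:tNCCA}, $t(g,a,b,c) = g + \varphi(g,a) + \varphi(g,b) + \varphi(g,c)$ with $\varphi(g,a) = -\varphi(a,g)$. I would set
\[
\delta(g, s_1, s_2, s_3, s_4, s_5, s_6) = g + \varphi(g, s_1) + \varphi(g, s_3) + \varphi(g, s_5),
\]
i.e.\ use the same flow function $\varphi$ as $\psi$ but only summed over the three neighbors belonging to the relevant triangular triple (the other three contributing nothing). Since this $\delta$ has exactly the form required by Theorem~\ref{th:HPNCCA} with $\psi(g,x) := \varphi(g,x)$ extended by $\psi(g,x)=0$ on the ``unused'' directions — or more cleanly, by defining $\delta(g,s_1,\dots,s_6) = g + \sum_{i} \psi(g,s_i)$ with a $\psi$ that is antisymmetric and vanishes appropriately — $\mathcal{H}$ is automatically number-conserving and permutation-symmetric. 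One verifies that $F_{\mathcal{H}}$ applied once to $\kappa(c)$ reproduces, on the embedded cells, exactly $F_{\mathcal{T}}(c)$, giving $\rho(F_{\mathcal{H}}(\kappa(c))) = F_{\mathcal{T}}(c)$, hence $\mathcal{T} \stackrel{1}{\prec} \mathcal{H}$ with $\tau = 1$.

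It remains to arrange surjectivity of $F_{\mathcal{H}}$. The plain construction above may fail to be surjective because the image of $\delta$ is controlled by $\varphi$; the standard fix is to enlarge the state set, writing $Q' = Q \times Q$ (or adjoining a ``register'' component carrying raw data that is simply shifted), and letting $\mathcal{H}$ act as the flow automaton on the first component while the second component is transported by a bijective shift-like map — a shift on $\mathbb{Z}^2$ is surjective, and pairing a shift with any rule keeps the product surjective provided the first-component rule is surjective for each fixed value of the carried data. A cleaner route: define $\mathcal{H}$ on $Q$ but choose the embedding and the ordering so that $\delta$ itself is a bijection for suitable restrictions, or simply post-compose with a global bijection. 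The main obstacle is precisely this surjectivity requirement together with keeping $\mathcal{H}$ simultaneously permutation-symmetric and number-conserving: one must check that the extra component (or the modification) does not destroy permutation symmetry — which it does not, since a shift and the flow rule are each permutation-symmetric in the six neighbor slots — and does not destroy number conservation, which follows because the carried component is merely permuted among cells and hence conserves its sum, so the total sum (under an appropriate numbering of $Q'$) is still conserved; invoking Theorem~\ref{th:HPNCCA} or Durand et al.'s criterion~\cite{Formenti} on the product rule closes the argument.
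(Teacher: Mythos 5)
There is a genuine gap at the heart of your construction: the rule $\delta(g,s_1,\dots,s_6)=g+\varphi(g,s_1)+\varphi(g,s_3)+\varphi(g,s_5)$ is \emph{not} permutation-symmetric, since it privileges the neighbor slots $1,3,5$ over $2,4,6$; so it violates the very requirement of the proposition and also falls outside the scope of Theorem~\ref{th:HPNCCA}, which characterizes number conservation only for permutation-symmetric hexagonal rules. Your proposed repair, ``a $\psi$ that is antisymmetric and vanishes appropriately on the unused directions,'' is incoherent as stated: under permutation symmetry $\psi$ is a function of the two \emph{states} only and cannot see directions, so if the unused neighbors carry states of $Q$ that also occur as used states, $\psi(g,x)$ would have to equal both $\varphi(g,x)$ and $0$. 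This is exactly why the paper does not keep the state set $Q$: it adjoins a fresh inert state $q'\notin Q_{\mathcal T}$, sets $\psi=\varphi$ on $Q_{\mathcal T}\times Q_{\mathcal T}$ and $\psi(x,q')=0$, embeds the triangular cells into the honeycomb sublattice of the hexagonal lattice (so that the three triangular neighbors of a used cell are among its six hexagonal neighbors), and fills the remaining cells with $q'$. Note also that filling the unused cells with the old quiescent state $q$ would not work either, since $\varphi(g,q)=q-t(q,g,q,q)$ need not vanish. This padding also defeats your claim that $\kappa$ can be taken to be a bijection of $\mathbb{Z}^2$ on the same state set: the embedding only covers part of the hexagonal lattice, so $\kappa$ is injective but not surjective and $\rho$ is only a one-sided inverse, as in the paper's own construction.

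Your treatment of surjectivity is also looser than needed. The product-with-a-shift trick requires renumbering $Q\times Q$ by integers so that the \emph{sum} is still conserved, which is not automatic and is not verified; moreover one must check the resulting rule is still of the flow form of Theorem~\ref{th:HPNCCA}. The paper instead invokes the algorithm of~\cite{IIIM04}, which adds the extra states produced by $t_{\mathcal H}$ to $Q_{\mathcal H}$ so as to make the local function surjective while preserving number conservation and symmetry; if you want a self-contained argument you would have to carry out a verification of that kind rather than appeal to a generic product construction.
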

\begin{proof}

  Let $\mathcal{T}=(\mathbb{Z}^2, 3, Q_{{\scriptsize{\mathcal T}}},
  t_{{\scriptsize{\mathcal T}}}, q)$ be a triangular
  permutation-symmetric NCCA with flow function $\varphi$.

  We construct $\mathcal{H}=(\mathbb{Z}^2, 6, Q_{{\scriptsize{\mathcal
        H}}}, t_{{\scriptsize{\mathcal H}}}, q')$ a
  permutation-symmetric hexagonal NCCA by designing its flow function $\psi$.

  Let $q' \notin Q_{{\scriptsize{\mathcal T}}}$ and
  $Q_{{\scriptsize{\mathcal H}}}=Q_{{\scriptsize{\mathcal T}}} \cup
  \{q'\}$; the flow function $\psi$ corresponding to $t_{{\scriptsize{\mathcal
        H}}}$ contains the following values:
\begin{eqnarray*}
  && {\rm For \ each \ } x,y \in Q_{{\scriptsize{\mathcal T}}}, {\rm assign \ } \psi(x,y)=\varphi(x,y), \\
  && {\rm For \ each \ } x \in Q_{{\scriptsize{\mathcal T}}}, {\rm assign \ } ,\psi(x,q')=0. \\
\end{eqnarray*}

Given an initial configuration of ${\mathcal T}$ (see
Fig.~\ref{A3-init-conf}), let the initial configuration of ${\mathcal
  H}$ be as depicted in Fig.~\ref{A6-init-conf}. From the process of
${\mathcal H}$, it is clear that the cells with value $q'$ don't have
any effect and the other non-quiescent cells simulate ${\mathcal
  T}$. Applying the algorithm in~\cite{IIIM04}, it is possible to add
extra-states produced by $t_{{\scriptsize{\mathcal H}}}$ to
$Q_{{\scriptsize{\mathcal H}}}$ and make the local function
$t_{{\scriptsize{\mathcal H}}}$ to be surjective.\qed
\end{proof}

\begin{figure}[htbp]
\begin{center}
\includegraphics[scale=0.45]{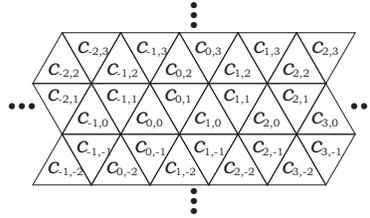}
\caption{An initial configuration of triangular CA ${\mathcal T}$.}
\label{A3-init-conf}
\end{center}
\end{figure}

\begin{figure}[htbp]
\begin{center}
\includegraphics[scale=0.35]{conf-hex1.eps}
\caption{An initial configuration of hexagonal CA ${\mathcal H}$.}
\label{A6-init-conf}
\end{center}
\end{figure}

\subsection{Simple simulation of a hexagonal NCCA by a triangular NCCA}
\label{sec:hex-tri}

\begin{proposition}
  For any hexagonal permutation-symmetric NCCA $\mathcal{H}$, there is
  a triangular permutation-symmetric NCCA $\mathcal{T}$ such that
  $\mathcal{H} \stackrel{2}{\prec} \mathcal{T}$ and whose transition
  function is surjective.
\end{proposition}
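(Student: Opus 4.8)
The simulation must go in the opposite direction from the previous proposition: a single hexagonal cell has six neighbours, but a triangular cell has only three, so one triangular step cannot gather enough information. The natural idea is to encode each hexagonal cell by a small block of triangular cells and let two triangular steps propagate the needed data. Concretely, I would exploit the fact that the hexagonal lattice can be obtained from the triangular lattice by grouping: every vertex of the hexagonal grid corresponds to a pair of adjacent triangles (an ``up'' triangle and a ``down'' triangle) sharing an edge, and the six hexagonal neighbours of a cell are exactly the six triangles adjacent to that pair. So I would fix a recursive map $\kappa$ that places the value $g$ of a hexagonal cell into one designated triangular cell of its block, fills the partner triangle with $q$, and record in the state alphabet $Q_{\mathcal T}=Q_{\mathcal H}\times\{0,1,2\}$ (or a similar bounded tagging) a phase bit telling each triangular cell which role it plays and where it sits in its block.

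\textbf{Key steps.} First, make the geometry explicit: choose coordinates on $\mathbb Z^2$ so that the triangular neighbour vectors are the three edges of a triangle, describe the two-to-one folding onto the hexagonal lattice, and check that the six hexagonal neighbours of a block are reached within graph-distance two in the triangular grid. Second, define the triangular flow function. Since Theorem~1 (and its remark) says a permutation-symmetric triangular NCCA is exactly determined by an antisymmetric $\varphi:Q_{\mathcal T}^2\to\mathbb Z$, I would set $\varphi$ on the new alphabet so that in the first triangular step each block ``publishes'' its value $g$ to its six surrounding triangles as a flow equal to $\psi_{\mathcal H}(g,\cdot)$ computed with partial information, and in the second step the carried flows are summed inside each block to reconstruct $\delta(g,a,b,c,d,e,f)=g+\sum\psi(g,\cdot)$. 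Number-conservation of $\mathcal T$ then comes for free from Theorem~1 because $\varphi$ is built to be antisymmetric; number-conservation of $\mathcal H$ is assumed, so the reconstructed value is consistent. Third, define $\rho$ by reading off the designated cell of each block and verify $\rho\circ\kappa=\mathrm{Id}$ and, using the staged flows, $\rho\bigl(F_{\mathcal T}^{2}(\kappa(c))\bigr)=F_{\mathcal H}(c)$, which is exactly $\mathcal H\stackrel{2}{\prec}\mathcal T$. Finally, as in the previous proof, invoke the construction of~\cite{IIIM04} to pad $Q_{\mathcal T}$ with the finitely many extra states produced by $t_{\mathcal T}$ so that the triangular local function becomes surjective, and note this padding preserves number-conservation and the simulation.

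\textbf{Main obstacle.} The delicate point is realising the ``two hops'' correctly with an \emph{antisymmetric} flow function while staying permutation-symmetric: each triangular cell in a block must emit flow to its neighbours in step one \emph{and} absorb the returning flows in step two without the two phases interfering, and the permutation symmetry of $t_{\mathcal T}$ forbids distinguishing neighbours by position, so the phase tags in the state must carry all the directional bookkeeping. Getting $\varphi$ antisymmetric is what forces the auxiliary flows to cancel pairwise across the two steps; I expect the bulk of the work to be checking that the intermediate configuration after one triangular step is well-defined (no two incoming flows collide) and that summing over a block reproduces precisely the six-term expression of Theorem~\ref{th:HPNCCA}. The rest --- recursiveness of $\kappa,\rho$, the identity $\rho\circ\kappa=\mathrm{Id}$, and surjectivity via~\cite{IIIM04} --- is routine.
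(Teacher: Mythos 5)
Your scaffolding matches the paper's strategy at the top level (hexagonal data cells embedded in the triangular grid at mutual distance two, separated by blank cells; two triangular steps; number conservation obtained from the flow-function characterizations; surjectivity by padding with the construction of~\cite{IIIM04}), but the crux of the construction is missing, and the one concrete step you do describe cannot work. In the first triangular step the cell carrying $g$ is adjacent only to blank cells, so by the arity-two flow characterization whatever it emits can depend only on $g$ and the blank state; it cannot ``publish $\psi_{\mathcal H}(g,\cdot)$ computed with partial information'', since $\psi(g,a)$ depends on a neighbour state $a$ that sits at triangular distance two. All the $\psi$-amounts must therefore be transferred in the \emph{second} step, between the focal cell and an intermediate cell, by a flow that is a function of exactly those two integer values. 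Hence the intermediate cell's value after step one must encode, injectively, which two hexagonal neighbours it touches, and the focal cell's own value must still identify its original state. Your device of a tagged alphabet $Q_{\mathcal H}\times\{0,1,2\}$ does not achieve this: NCCA states are integers whose sum is conserved, so tags cannot be attached freely, and even if encoded numerically, a phase bit telling a cell its role says nothing to the focal cell about \emph{which pair} of neighbour states lies behind each of its three blanks.

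The paper solves precisely this decoding problem arithmetically: normalize so that $0<s_i<M$, put $p_i=2^{\lceil\lg M\rceil+i}$, and set $\varphi(0,s_i)=p_i$, so that in step one each data cell sends the \emph{identifier} $p_i$ (not a $\psi$-value) to its three blank neighbours, leaving itself in a state $s_i-3p_i$ that still determines $s_i$, while each blank becomes $p_x+p_y+p_z$, a sum of distinct powers of two from which the three surrounding states can be uniquely recovered. The second-step flow $\varphi(-3p_i,\,p_i+p_j+p_k)=p_i+\psi(s_i,s_j)+\psi(s_i,s_k)$ (extended antisymmetrically, all other values $0$) then returns the identifiers and delivers exactly the six $\psi$-terms of Theorem~\ref{th:HPNCCA}, restoring the blanks to $0$ and the focal cell to $\delta(g,a,b,c,d,e,f)$; permutation symmetry is automatic because only unordered pairs of values are used. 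Without some such injective numeric encoding your two-phase scheme cannot be completed. A minor geometric correction as well: a pair of edge-adjacent triangles has only four edge-adjacent triangles, not six; the fact actually needed is that each data triangle's three blank neighbours each touch two further data triangles, which yields the six hexagonal neighbours at distance two.
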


\begin{proof}
  Let $\mathcal{H}=(\mathbb{Z}^2, 6, Q_{\scriptsize{\mathcal{H}}},
  t_{\scriptsize{\mathcal{H}}}, q)$ be a hexagonal
  permutation-symmetric NCCA.  We assume that
  $Q=\{s_0,s_1,\cdots,s_{m-1}\}$ and its flow function is $\psi$
  and $\mathcal{H}$ has an initial configuration as pictured in
  Fig.~\ref{A6-init-conf}.

\begin{figure}[htbp]
\begin{center}
\includegraphics[scale=0.50]{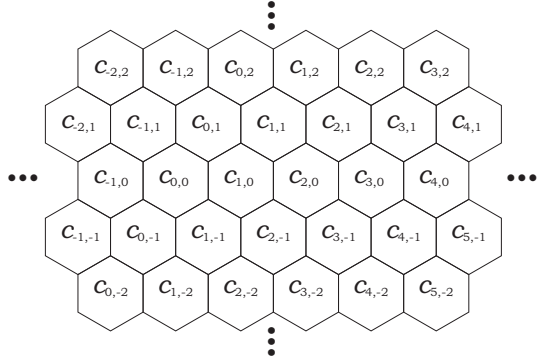}
\caption{An initial configuration of hexagonal CA $\mathcal{H}$.}
\label{A6-init-conf}
\end{center}
\end{figure}

Without loss of generality, we can assume $0<min(s_i)<max(s_j)<M$ for
a constant $M$, because non-positive numbers in
$Q_{\scriptsize{\mathcal{H}}}$ can be erased by adding a constant
value to every state numbers and by changing the arguments of the rules.

We construct $\mathcal{T}=(\mathbb{Z}^2, 3,
Q_{\scriptsize{\mathcal{T}}}, t_{\scriptsize{\mathcal{T}}}, 0)$ a
permutation-symmetric triangular NCCA by designing its flow function
$\varphi$.

First we assign numbers corresponding to each state in
$Q_{\scriptsize{\mathcal{H}}}$.  Let $p_i = 2^{\lceil \lg M \rceil+i}
(i=0,\cdots,m-1)$.

The flow function $\varphi$ of $t_{\scriptsize{\mathcal{T}}}$ contains
the following values:

For each $s_i \in Q_{\scriptsize{\mathcal{H}}}$, assign
\begin{eqnarray*}
&& \varphi(0,s_i)=p_i. \\
&& {\rm For \ each \ combination \ of \ } s_j,s_k \in Q_{\scriptsize{\mathcal{H}}}, \\
&& \varphi(-3p_i,p_i+p_j+p_k)=p_i+\psi(s_i,s_j)+\psi(s_i,s_k). \\
\end{eqnarray*}
All other values of $\varphi$ are $0$. The local function $t_{{\scriptsize{\mathcal T}}}$
can also be extended to be surjective.

The initial configuration of $\mathcal{T}$ is chosen as in Fig.~\ref{P3-init-conf}.
\begin{figure}[htbp]
\begin{center}
\includegraphics[scale=0.50]{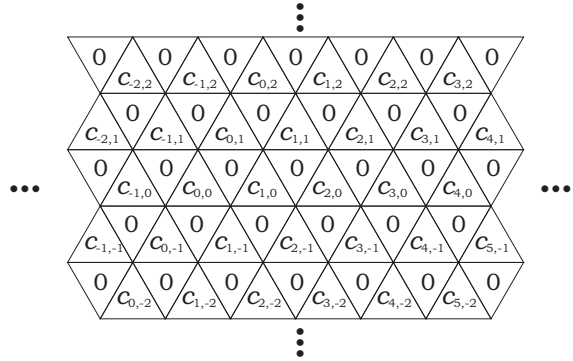}
\caption{An initial configuration of triangular CA $\mathcal{T}$.}
\label{P3-init-conf}
\end{center}
\end{figure}

We briefly explain how the rules work. In the first step, each nonzero
state `$x$' in Fig.~\ref{P3-init-conf} moves three values $p_x$ to the
neighboring zero cells; after this, the zero cells contain the value
$p_x+p_y+p_z$ related to the three neighboring cells
`$x$',`$y$',`$z$'. In the second step, the cell on which we focus (was
`$x$') knows the values of two neighboring cells of $\mathcal{H}$ by the
neighboring cell which value is $p_x+p_y+p_z$ and can move the values of
$\psi(s_x,s_y)+\psi(s_x,s_z)$.\qed
\end{proof}

\section{Conclusion}
In this paper, we have designed flow functions for number-conserving
triangular and hexagonal cellular automata under the
permutation-symmetry condition. This was also generalized to the
rotation-symmetry for triangular NCCA. A simulation between
triangular and hexagonal NCCA, and conversely, were also proposed.

This work can be extended in several ways. First, we'd like to know if
the flow function we proposed in the permutation-symmetry hexagonal
case also holds in the rotation-symmetric case. Then, we aim to go on
in the number-conserving simulation of different neighborhoods like
simulating a von Neumann neighborhood in a square lattice by a
triangular NCCA.

Finally, it might be possible to generalize these results to the case
of number-conserving cellular automata on Cayley graphs and to find
flow functions in several cases and number-conserving simulations as
well.

\subsubsection*{Acknowledgments.}
This work was done while K. Imai was visiting I3S laboratory thanks to
a grant from the french CNRS institution.



\end{document}